\newcommand{\nond}{N_{1},N_{2}}
\newcommand{\RR}{\mathbb{R}}
\newcommand{\btheo}{\begin{theorem}}
\newcommand{\etheo}{\end{theorem}}
\newcommand{\blem}{\begin{lemma}}
\newcommand{\elem}{\end{lemma}}
\newtheorem{theorem}{Theorem}
\newtheorem{lemma}[theorem]{Lemma}
\newenvironment{proof}[1][Proof]{\noindent\textbf{#1.} }{\ \rule{0.5em}{0.5em}}
\begin{document}

\date{\small\null\hspace*{1.64em}Date: \ \quad manuscript --- August 2004,\newline
\null\hspace*{0.6em}ArXiv --- January 2012\hspace*{0.62em}\null}

\title{Phase transitions in the time synchronization model}
\author{V. Malyshev\thanks{ Postal address: Faculty of Mathematics and Mechanics, Moscow State
University, Leninskie Gory 1, \mbox{GSP-1},  119991, Moscow, Russia. \quad
E-mail: \quad malyshev2@yahoo.com, \quad manita@mech.math.msu.su~.\newline 
\hspace*{1\parindent} 
First published in  \  Theory Probab. Appl. 50, pp. 134-141 (2006)
}\and  A. Manita}
\maketitle

\begin{abstract}
We continue the study of the time synchronization model from arXiv:1201.2141~.
There are two types \ $i=1,2$ of particles on the line $\RR$, with $N_{i}$
particles of type~$i\,$. Each particle of type $i$ moves with constant
velocity $v_{i}$. Moreover, any particle of type $i=1,2$ jumps to any
particle of type $j=1,2$ with rates $N_{j}^{-1}\alpha _{ij}$. We find phase
transitions in the clusterization (synchronization) behaviour of this system
of particles on different time scales $t=t(N)$ relative to $N=N_{1}+N_{2}$.

\medskip 
\noindent{\bf Keywords:} Markov process, stochastic particles system,
synchronization model

\smallskip 
\noindent{\bf 2000 MSC:} 60K35, 60J27, 60F99

\end{abstract}

\section{The Model and Main result}

The simplest formulation of the model, we consider here, is in terms of the
particle system. \ On the real line there are $N_{1}$ particles of type $1$ and
$N_{2}$ particles of type $2$, $N=N_{1}+N_{2}$. Each particle of type $i=1,2$
performs two independent movements. First of all, it moves with constant speed
$v_{i}$ in the positive direction. \ We assume further that $v_{i}$ are constant
and different, thus we can assume without loss of generality that $0\leq
v_{1}<v_{2}$. The degenerate case $v_{1}=v_{2}$ is different and will be
considered separately.

Secondly, at any time interval $\left[t,t+dt\right]$ each particle of type $i $
independently of the others with probability $\alpha _{ij}dt$ decides to make a
jump to some particle of type $j$ and chooses the coordinate of the $j $-type
particle, where to jump, among the particles of type $j$, with probability
$\frac{1}{N_{j}}$. Here $\alpha _{ij}$ are given nonnegative parameters for
$i,j=1,2$. Further on, unless otherwise stated, we assume that
$\alpha _{11}=\alpha _{22}=0$, $\alpha _{12},\alpha _{21}>0$.

After such instantaneous jump the particle of type $i$ continues the
movement with the same velocity $v_{i}$. \ This defines continuous time
Markov chain
\begin{equation}\label{eq:xin1n2}
\xi _{N_{1},N_{2}}(t)=\left( x_{1}^{(1)}(t),\ldots
,x_{N_{1}}^{(1)}(t);x_{1}^{(2)}(t),\ldots ,x_{N_{2}}^{(2)}(t)\right) ,
\end{equation}
where $x_{k}^{(i)}(t)$ is the coordinate of $k$-th particle of type $i$ at time $t$.
 We assume that the initial coordinates $x_{k}^{(i)}(0)$ of the particles
at time $0$ are given. We are interested in the long time evolution of this
system on various scales with $N\rightarrow \infty $, $t=t(N)\rightarrow \infty $.

In different terms, this can be interpreted as the time synchronization problem.
In general, time synchronization problem can be presented as follows. There are
$N$ systems (processors, units, persons etc.) There is an absolute (physical)
time $t$, but each processor $j$ fulfills a~homogeneous job in its own proper
time $t_{j}=v_{j}t$, $v_{j}>0$. Proper time is measured by the amount $v_{j}$ of
the job, accomplished by the processor for the unit of the physical time, if it
is disjoint from other processors. However, there is a communication between
each pair of processors, which should lead to drastic change of their proper
times. In our case the coordinates $x_{k}^{(i)}(t)$ can be interpreted as the
modified proper times of the particles-processors, the nonmodified proper time
being $x_{k}^{(i)}(0)+v_{i}t$.

There can be many variants of exact formulation of such problem, see
\cite{GrMaPo,ManSch,MitMit}. We will call the model considered here the basic
model, because there are no restrictions on the jump process. Many other
problems include such restrictions, for example, only jumps to the left are
allowed. Due to absence of restrictions, this problem, as we will see below, is
a ``linear problem'' in the sense that after scalings it leads to linear
equations. In despite of this it has nontrivial behaviour, one sees different
picture on different time scales.

There are, however, other interesting interpretations of this model, related to
psychology, biology and physics. For example, in social psychology perception of
time and life tempo strongly depends on the social contacts and intercourse. We
will not enter the details here.

We show that the process consists of three consecutive stages: initial
desynchronization up to the critical scale, critical slow down of
desynchronization and final stabilization.


Introduce the empirical means (mass centres) and the empirical variances
\[
\overline{x^{(i)}}(t)=\frac{1}{N_{i}}\sum_{k=1}^{N_{i}}x_{k}^{(i)}(t),
\qquad
S_{i}^{2}(t)=\frac{1}{N_{i}}\sum_{k=1}^{N_{i}}\left( x_{k}^{(i)}(t)-\overline{x^{(i)}}(t)\right) ^{2}
\]
for types 1 and 2 and their means
\[
\mbox{\boldmath$\mu$}_{i}(t)=\mathsf{E}\overline{x^{(i)}}(t),\quad
\mbox{\boldmath $l$}_{12}(t)=\mbox{\boldmath$\mu$}_{1}(t)-
\mbox{\boldmath$\mu$}_{2}(t),\quad R_{i}(t)=\mathsf{E}S_{i}^{2}(t)
\]
Our first result concerns the asymptotic behavior of the empirical means.
\begin{theorem} 
For any sequences of
triplets $(N_1,N_2,t)$ such that $\min(N_1,N_2)\rightarrow \infty$ and
$t=t(N_1,N_2)\rightarrow \infty $ the following statements hold
\label{t:l-mu} \mbox{ }
\[
\mbox{\boldmath $l$}_{12}(t)\rightarrow \frac{v_{1}-v_{2}}{\alpha
_{12}+\alpha _{21}},\qquad \frac{\mbox{\boldmath$\mu$}_{i}(t)}{t}\rightarrow
\frac{\alpha _{12}v_{2}+\alpha _{21}v_{1}}{\alpha _{12}+\alpha _{21}}
\]
\end{theorem}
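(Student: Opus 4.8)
The plan is to reduce the statement to an exact, finite, and $N$-independent system of ordinary differential equations for the first moments, and then to solve it explicitly. First I would apply the Markov generator of the process $\xi_{N_1,N_2}(t)$ to the coordinate functions $x_k^{(i)}$. Between jumps each coordinate drifts with velocity $v_i$; a jump of a type-$i$ particle occurs at rate $\alpha_{ij}$ and resets its coordinate to $x_m^{(j)}$ for a uniformly chosen $m$, which after averaging over $m$ contributes $\alpha_{ij}(\overline{x^{(j)}}-x_k^{(i)})$. Since by assumption $\alpha_{11}=\alpha_{22}=0$, Dynkin's formula together with summation over $k$ and division by $N_i$ gives, exactly,
\[
\frac{d}{dt}\bmu_1 = v_1 + \alpha_{12}(\bmu_2-\bmu_1),\qquad \frac{d}{dt}\bmu_2 = v_2 + \alpha_{21}(\bmu_1-\bmu_2).
\]
The crucial point is that the jump target is the empirical mean $\overline{x^{(j)}}$, whose expectation is precisely $\bmu_j$; this is the ``linearity'' of the model emphasized in the introduction, and it closes the first-moment system with no approximation and independently of $N_1,N_2$.

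Next I would study the difference $\bl_{12}=\bmu_1-\bmu_2$. Subtracting the two equations yields the scalar linear equation
\[
\frac{d}{dt}\bl_{12} = (v_1-v_2) - (\alpha_{12}+\alpha_{21})\,\bl_{12},
\]
whose unique equilibrium is $(v_1-v_2)/(\alpha_{12}+\alpha_{21})$. Because $\alpha_{12}+\alpha_{21}>0$, the solution converges to this equilibrium exponentially fast: the quantity $\bl_{12}(t)-\frac{v_1-v_2}{\alpha_{12}+\alpha_{21}}$ decays like $e^{-(\alpha_{12}+\alpha_{21})t}$, which gives the first assertion for every sequence of triplets with $t\to\infty$.

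For the second assertion I would exploit the weighted combination that cancels the coupling. Forming $\alpha_{21}\frac{d}{dt}\bmu_1+\alpha_{12}\frac{d}{dt}\bmu_2$, the interaction terms cancel and
\[
\frac{d}{dt}\bigl(\alpha_{21}\bmu_1+\alpha_{12}\bmu_2\bigr)=\alpha_{21}v_1+\alpha_{12}v_2,
\]
so $\alpha_{21}\bmu_1(t)+\alpha_{12}\bmu_2(t)$ is exactly affine in $t$ with slope $\alpha_{21}v_1+\alpha_{12}v_2$. Dividing by $t$ and using that $\bl_{12}(t)=\bmu_1-\bmu_2$ stays bounded (hence $\bl_{12}(t)/t\to 0$), I would solve these two relations for $\bmu_1/t$ and $\bmu_2/t$ separately; both then tend to $(\alpha_{12}v_2+\alpha_{21}v_1)/(\alpha_{12}+\alpha_{21})$.

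The algebra above is elementary once the moment equations are in hand, so the only real obstacle is the rigorous derivation of those equations: one must verify that $\sE|x_k^{(i)}(t)|<\infty$ and that differentiation under the expectation is legitimate. This follows because every coordinate stays below $\max_{k,i}x_k^{(i)}(0)+v_2 t$, which, combined with the boundedness of the jump rates, justifies Dynkin's formula and the continuous differentiability of the first moments. I would also note that, since the limiting equations do not involve $N_1,N_2$, the convergence is in fact uniform in $N_1,N_2$, so the hypothesis $\min(N_1,N_2)\to\infty$ is not actually required for this particular statement.
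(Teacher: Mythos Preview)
Your argument is correct.  The paper, however, does not work directly with the continuous-time generator: it passes to the embedded jump chain $\zeta_{N_1,N_2}(n)=\xi_{N_1,N_2}(\tau_n)$ and derives the discrete recursions
\[
\mu_i(n+1)=\mu_i(n)+\bigl[\alpha_{ij}(\mu_j(n)-\mu_i(n))+v_i\bigr]\gamma_{N_1,N_2}+\alpha_{ij}(v_j-v_i)\gamma_{N_1,N_2}^2,
\]
then forms the same two scalar combinations you use, $l_{12}=\mu_1-\mu_2$ and $S=\alpha_{21}\mu_1+\alpha_{12}\mu_2$, and solves them explicitly.  So the algebraic core is identical; what differs is the time parametrisation.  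Your Dynkin/generator route gives an exact, $N$-free linear ODE system with no $O(\gamma_{N_1,N_2}^2)$ correction, which is why you correctly observe that the hypothesis $\min(N_1,N_2)\to\infty$ is superfluous for this theorem.  In the paper's embedded-chain computation the limiting value of $l_{12}(n)$ is $\frac{v_1-v_2}{\alpha_{12}+\alpha_{21}}R$ with $R=1-\gamma_{N_1,N_2}(\alpha_{12}+\alpha_{21})$, so the assumption $N\to\infty$ (hence $R\to1$) is genuinely used there, together with a law-of-large-numbers identification $n\gamma_{N_1,N_2}\sim t$.  The payoff of the paper's choice is that the embedded-chain machinery is reused for the second-moment analysis in Theorem~\ref{t-RR}; for Theorem~\ref{t:l-mu} alone your continuous-time derivation is the shorter and cleaner of the two.
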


Assume now that $N_{i}=[c_{i}N]$, where $c_{i}>0$, $c_{1}+c_{2}=1$.
Results of the next theorem cover all domain of asymptotical behavior of $t(N)$ as
$N\rightarrow \infty$.
\begin{theorem}
\label{t-RR} There are the following three regions of asymptotic behaviour,
uniform in $t(N)$ for sufficiently large $N$:

\begin{itemize}
\item if ${\displaystyle\frac{t(N)}{N}\rightarrow 0}$, then $R_{i}(t(N))\sim
h\varkappa _{2}t(N)$,

\item if $t=t(N)=sN$ for some $s>0$, then $R_{i}(t(N))\sim
h\,(1-e^{-\varkappa _{2}s})N$,

\item if ${\displaystyle\frac{t(N)}{N}\rightarrow \infty }$, then $R_{i}(t(N))\sim hN$,
\end{itemize}

where the constant $\varkappa _{2}>0$ is defined by the formula~(\ref{eq:kap2}), see below, and
\[
h=\frac{2\alpha _{12}\alpha _{21}\left( v_{1}-v_{2}\right) ^{2}}{\varkappa
_{2}(\alpha _{12}+\alpha _{21})^{3}}\,.
\]
\end{theorem}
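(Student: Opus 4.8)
The plan is to reduce the statement to the asymptotic analysis of an explicit finite linear system of ODEs for the second moments, obtained from the generator of $\xi_{N_1,N_2}(t)$, and then to run a two--timescale (singular perturbation) argument on that system. First I would split the generator $L$ into the drift part $\sum_{i,k} v_i\,\partial_{x_k^{(i)}}$ and the jump part, and apply it to the relevant polynomials in the coordinates. On $x_k^{(i)}$ it reproduces the linear system for the means behind Theorem~\ref{t:l-mu}, so that $\boldsymbol{l}_{12}(t)\to l_*:=(v_1-v_2)/(\alpha_{12}+\alpha_{21})$. On the quadratics $\overline{(x^{(i)})^2}$, $(\overline{x^{(i)}})^2$ and $\overline{x^{(1)}}\,\overline{x^{(2)}}$, after taking expectations, it yields a closed linear system for $A_i:=\mathsf{E}\,\overline{(x^{(i)})^2}$, $B_i:=\mathsf{E}(\overline{x^{(i)}})^2$ and $C:=\mathsf{E}[\overline{x^{(1)}}\,\overline{x^{(2)}}]$; this closure is precisely the ``linearity'' of the model noted in the introduction. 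The essential point of the computation is to keep the $1/N_i$ terms coming from the quadratic variation of the jumps in $L(\overline{x^{(i)}})^2$ and $L(\overline{x^{(1)}}\,\overline{x^{(2)}})$: they look negligible but are the sole driving source of the variances. Setting $R_i=A_i-B_i$ and $G=B_1+B_2-2C=\mathsf{E}[(\overline{x^{(1)}}-\overline{x^{(2)}})^2]$, the large $O(t^2)$ parts cancel and one is left with
\[
\begin{array}{rcl}
\dot R_1 &=& \alpha_{12}(R_2-R_1)+\alpha_{12}G-\frac{\alpha_{12}}{N_1}(R_1+R_2+G),\\
\dot R_2 &=& \alpha_{21}(R_1-R_2)+\alpha_{21}G-\frac{\alpha_{21}}{N_2}(R_1+R_2+G),\\
\dot G &=& 2(v_1-v_2)\boldsymbol{l}_{12}-2(\alpha_{12}+\alpha_{21})G+\left(\frac{\alpha_{12}}{N_1}+\frac{\alpha_{21}}{N_2}\right)(R_1+R_2+G),
\end{array}
\]
coupled to the already solved scalar equation for $\boldsymbol{l}_{12}(t)$.

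The core of the argument is the time--scale separation hidden in this system. The homogeneous part obtained by deleting the $O(1/N)$ terms has a one--dimensional kernel spanned on the left by $P:=\alpha_{21}R_1+\alpha_{12}R_2$, while the complementary directions $Q:=R_1-R_2$ and $G-l_*^2$ relax at the $O(1)$ rates $\alpha_{12}+\alpha_{21}$ and $2(\alpha_{12}+\alpha_{21})$. Using $\boldsymbol{l}_{12}\to l_*$ one checks that $G\to l_*^2$ and that $Q$ settles to an $O(1)$ quasi--stationary value; both remain bounded uniformly in $t$, so $R_1+R_2=2P/(\alpha_{12}+\alpha_{21})+O(1)$. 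Taking the combination $\alpha_{21}\dot R_1+\alpha_{12}\dot R_2$ the $O(1)$ relaxation cancels and one is left with the scalar slow equation
\[
\dot P=2\alpha_{12}\alpha_{21}\,l_*^2-\frac{\varkappa_2}{N}\,P+O(1/N),\qquad \varkappa_2=\frac{2\alpha_{12}\alpha_{21}}{c_1c_2(\alpha_{12}+\alpha_{21})},
\]
where the slow rate $\varkappa_2/N$ arises from $N_1^{-1}+N_2^{-1}=(c_1c_2N)^{-1}$. Integrating from a negligible $O(1)$ initial value and using $R_i\approx P/(\alpha_{12}+\alpha_{21})$ once $P$ is large gives $R_i(t)\approx hN\bigl(1-e^{-\varkappa_2 t/N}\bigr)$, with exactly the constant $h$ of the statement because $l_*^2=(v_1-v_2)^2/(\alpha_{12}+\alpha_{21})^2$.

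The three regions are then read off from this single formula: if $t/N\to0$ then $1-e^{-\varkappa_2 t/N}\sim \varkappa_2 t/N$ and $R_i\sim h\varkappa_2 t$; if $t=sN$ then $R_i\sim hN(1-e^{-\varkappa_2 s})$; and if $t/N\to\infty$ the exponential vanishes and $R_i\sim hN$.

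The main difficulty is not this formal derivation but making the reduction rigorous \emph{uniformly in} $t(N)$, since the slow mode must be followed over horizons as long as $t\sim N$ and one must ensure that the $O(1/N)$ errors from replacing $G,\boldsymbol{l}_{12}$ by their limits and from slaving $Q$ to $P$ do not accumulate to leading order. The clean way is to keep the full seven--dimensional linear moment system $(\mathsf{E}\overline{x^{(i)}},A_i,B_i,C)$ with its explicit $N$--dependence, diagonalize the $N$--independent principal part, verify that its spectrum splits into $O(1)$ eigenvalues plus a single $O(1/N)$ eigenvalue whose eigenprojection is asymptotically $P$, and control the off--slow components by Gronwall estimates uniform in $t$. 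The transients of the fast modes and the $O(1)$ initial variances $R_i(0)$ are then negligible in every region, where $R_i$ either diverges or is of order $N$, which is what justifies dropping them.
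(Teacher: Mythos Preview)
Your approach is the continuous-time analogue of the paper's: the paper passes to the embedded jump chain and studies the recursion $w(n+1)=Aw(n)+f(n)+g$ for $w=(d_1,d_2,r)^T$, whereas you write the generator ODEs directly for $(R_1,R_2,G)$. The two $3\times3$ systems are the same object (the paper's $B=B_1+B_2$ is your right-hand side), and both proofs proceed by isolating the single slow eigendirection; so the strategy is correct and essentially the paper's.

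There is, however, a concrete error in your computation of $\varkappa_2$. When you slave the fast modes you replace $G$ by $l_*^2$, but in the regime where $P$ is of order $N$ the feedback term $\bigl(\tfrac{\alpha_{12}}{N_1}+\tfrac{\alpha_{21}}{N_2}\bigr)(R_1+R_2+G)$ in the $G$-equation is $O(1)$, not $O(1/N)$. The quasi-stationary value of $G$ is therefore
\[
G\;\approx\;l_*^2+\frac{1}{2(\alpha_{12}+\alpha_{21})}\Bigl(\frac{\alpha_{12}}{c_1}+\frac{\alpha_{21}}{c_2}\Bigr)\frac{R_1+R_2}{N},
\]
and the second term, when fed back through $2\alpha_{12}\alpha_{21}G$ in $\dot P$, is of the \emph{same} order $P/N$ as the decay term you kept. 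Including it one gets
\[
\dot P=2\alpha_{12}\alpha_{21}l_*^2-\frac{\varkappa_2}{N}P+o(P/N),\qquad
\varkappa_2=\frac{2\alpha_{12}\alpha_{21}}{(\alpha_{12}+\alpha_{21})^2}\Bigl(\frac{\alpha_{21}}{c_1}+\frac{\alpha_{12}}{c_2}\Bigr),
\]
which is the paper's formula~(\ref{eq:kap2}); your value $\tfrac{2\alpha_{12}\alpha_{21}}{c_1c_2(\alpha_{12}+\alpha_{21})}$ is obtained by dropping this feedback and is wrong (check $c_1=c_2=\tfrac12$, $\alpha_{12}=\alpha_{21}$: the two differ by a factor~$2$). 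Equivalently, in the full diagonalisation you sketch at the end, the $O(1/N)$ eigenvalue is $\psi'B_2\phi$ with both the left eigenvector $\psi'$ and the perturbation $B_2$ having nontrivial third components, so the $G$-row of $B_2$ contributes. The product $h\varkappa_2$ is unaffected, so your first region is right, but the constants in the second and third regions are not.
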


\section{Proof}

\paragraph{Embedded Markov Chain}

To prove theorems~\ref{t:l-mu} and~\ref{t-RR} we will use an embedded Markov
chain. Consider the continuous time Markov chain $\xi _{N_{1},N_{2}}(t)=\xi
_{N_{1},N_{2}}(t,\omega )$, defined by~(\ref{eq:xin1n2}), and let
\[
\tau _{1}(\omega )<\tau _{2}(\omega )<\cdots<\tau _{n}=\tau _{n}(\omega )<\cdots
\]
be the random moments of particle jumps. Then $\{\tau _{n+1}-\tau
_{n}\}_{n=1}^\infty$ are i.~i.~d.~random variables exponentially distributed
with mean $\gamma _{N_{1},N_{2}}=\left( N_{1}\alpha _{12}+N_{2}\alpha
_{21}\right) ^{-1}$.

We introduce a \emph{discrete time} Markov chain $\zeta _{N_{1},N_{2}}(n)$, $n=1,2,\ldots $~,
\[
\zeta _{N_{1},N_{2}}(n,\omega )=\xi _{N_{1},N_{2}}(\tau _{n}(\omega ),\omega
)
\]
with the same state space $\RR^{N_{1}+N_{2}}$. The idea is that the asymptotic
behavior of the continuous time particle system~$\xi _{N_{1},N_{2}}(t)$ can
be reduced to the asymptotic properties of the discrete time chain $\zeta
_{N_{1},N_{2}}(n)$. Indeed, by the Law of Large Numbers
\[
\tau _{n}\sim \frac{n}{N_{1}\alpha _{12}+N_{2}\alpha _{21}}\qquad
(n\rightarrow \infty )
\]
In other words, if $n$ is large, the value $n\gamma _{N_{1},N_{2}}=n\left(
N_{1}\alpha _{12}+N_{2}\alpha _{21}\right) ^{-1}$ is asymptotically equal to
the {}\textquotedblleft physical\textquotedblright\ time~$t$ associated with
the continuous time particle system~$\xi _{N_{1},N_{2}}(t)$. Similarly to
the empirical mean and empirical variance we introduce, for the embedded
chain,

\[
X_{i}(n)=\frac{1}{N_{i}}\sum_{k=1}^{N_{i}}x_{k}^{(i)}(\tau _{n}),\qquad
D_{i}(n)=\frac{1}{N_{i}}\sum_{k=1}^{N_{i}}\left( x_{k}^{(i)}(\tau
_{n})-X_{i}(n)\right) ^{2}
\]
Evidently, $X_{i}(n)=\overline{x^{(i)}}(\tau _{n})$ and $D_{i}(n)=S_{i}^{2}(
\tau _{n})$. In the sequel we shall deal with their expected values
\[
\mu _{i}(n)=\mathsf{E}X_{i}(n),\qquad d_{i}(n)=\mathsf{E}D_{i}(n),
\]
and shall need also the notation
\[
l_{12}(n)=\mu _{1}(n)-\mu _{2}(n),\quad \quad r(n)=\mathsf{E}\left(
X_{1}(n)-X_{2}(n)\right) ^{2}
\]

\paragraph{Closed equation for empirical means}

Here we prove Theorem \ref{t:l-mu}.

The following lemma can be checked by a straightforward calculation.

\begin{lemma}
The functions $\mu _{i}(n)$ satisfy to the following closed system
\begin{eqnarray*}
\mu _{1}(n+1) &=&\mu _{1}(n)+\left[ \alpha _{12}\left( \mu _{2}(n)-\mu
_{1}(n)\right) +v_{1}\right] \gamma _{N_{1},N_{2}}+\alpha _{12}\left(
v_{2}-v_{1}\right) \gamma _{N_{1},N_{2}}^{2} \\
\mu _{2}(n+1) &=&\mu _{2}(n)+\left[ \alpha _{21}\left( \mu _{1}(n)-\mu
_{2}(n)\right) +v_{2}\right] \gamma _{N_{1},N_{2}}+\alpha _{21}\left(
v_{1}-v_{2}\right) \gamma _{N_{1},N_{2}}^{2}
\end{eqnarray*}
\end{lemma}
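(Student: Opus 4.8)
The plan is to derive the recursions by conditioning on the $(n{+}1)$-th jump and using the explicit jump rules of the model. The strategy is to write $X_i(n{+}1)$ in terms of $X_i(n)$ by accounting for two contributions over the inter-jump interval $[\tau_n,\tau_{n+1}]$: the deterministic drift coming from the constant velocities $v_i$, and the single random jump that occurs at $\tau_{n+1}$. First I would condition on $\mathcal{F}_n$ (the history up to the $n$-th jump) and on the inter-jump time $\tau_{n+1}-\tau_n$, which is exponential with mean $\gamma_{N_1,N_2}$. During this interval every particle of type $i$ drifts by $v_i(\tau_{n+1}-\tau_n)$, so each empirical mean receives a deterministic increment $v_i(\tau_{n+1}-\tau_n)$ before the jump is executed.

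Next I would handle the jump itself. At time $\tau_{n+1}$ exactly one particle jumps; by the jump rates, the probability that the jumping particle is of type $1$ (jumping onto a type-$2$ target) is $\frac{N_1\alpha_{12}}{N_1\alpha_{12}+N_2\alpha_{21}}$, and symmetrically for type $2$. When a type-$1$ particle lands on a uniformly chosen type-$2$ particle, the affected type-$1$ mean $X_1$ changes by $\frac{1}{N_1}\bigl(\text{(target type-2 coordinate)}-\text{(chosen type-1 coordinate)}\bigr)$. The key simplification is that the chosen source and target are uniform, so conditional expectations of these coordinates reduce to the empirical means $X_1(n),X_2(n)$ evaluated at the drifted positions. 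Taking $\mathsf{E}$ turns these into $\mu_1(n),\mu_2(n)$. Assembling the expected jump contribution gives the term $\alpha_{12}(\mu_2(n)-\mu_1(n))\gamma_{N_1,N_2}$ in the first equation (the $N_1$ in the rate cancels against the $1/N_1$ normalization, and the remaining factor is $\gamma_{N_1,N_2}$), and symmetrically $\alpha_{21}(\mu_1(n)-\mu_2(n))\gamma_{N_1,N_2}$ in the second.

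The genuinely delicate point — and the reason the $\gamma^2$ correction terms appear — is that the coordinate a jumping particle \emph{lands on} is the target's position \emph{at} $\tau_{n+1}$, i.e.\ after the target has already drifted by $v_j(\tau_{n+1}-\tau_n)$, whereas the source coordinate it leaves has drifted by $v_i(\tau_{n+1}-\tau_n)$. Thus the jump increment carries an extra $(v_j-v_i)(\tau_{n+1}-\tau_n)$ relative to using the positions at $\tau_n$. When I take expectations, the drift increment $v_i(\tau_{n+1}-\tau_n)$ contributes $v_i\gamma_{N_1,N_2}$ (using $\mathsf{E}(\tau_{n+1}-\tau_n)=\gamma_{N_1,N_2}$), while this extra velocity-difference piece, which is multiplied by the jump indicator, contributes a second factor of $\gamma_{N_1,N_2}$: for the type-$1$ equation it yields $\alpha_{12}(v_2-v_1)\gamma_{N_1,N_2}^2$. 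Here I would be careful that the jump time and the inter-jump length are the same random variable, so the product of the jump indicator and $(\tau_{n+1}-\tau_n)$ must be averaged jointly rather than factorized naively.

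The main obstacle I anticipate is bookkeeping the exact moment at which coordinates are read off: consistently using the drifted ($\tau_{n+1}^-$) positions for both the landing target and the departing particle is what produces the correct sign and coefficient of the $\gamma_{N_1,N_2}^2$ terms, and it is easy to drop or misattribute this correction. Once the conditional expectation over the jump choice (uniform source and target) and over the exponential inter-jump time is carried out correctly, collecting the three contributions — drift $v_i\gamma_{N_1,N_2}$, linear relaxation $\alpha_{ij}(\mu_j-\mu_i)\gamma_{N_1,N_2}$, and the second-order velocity correction $\alpha_{ij}(v_j-v_i)\gamma_{N_1,N_2}^2$ — directly yields the stated closed system, and closedness is automatic because only $\mu_1,\mu_2$ (never higher moments) enter the right-hand side.
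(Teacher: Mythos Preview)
Your proposal is correct and is exactly the ``straightforward calculation'' the paper alludes to without spelling out: condition on the history up to $\tau_n$, account for the drift $v_i(\tau_{n+1}-\tau_n)$ and for the single jump at $\tau_{n+1}$ (whose type is chosen with probability $N_i\alpha_{ij}\gamma_{N_1,N_2}$ and whose source/target are uniform), then take expectations. One small remark: your caution about averaging the jump indicator and $\tau_{n+1}-\tau_n$ ``jointly rather than factorized naively'' is unnecessary, since in the competing-exponentials construction the identity of the ringing clock is independent of the waiting time, so the factorization is in fact valid and immediately gives the $\alpha_{12}(v_2-v_1)\gamma_{N_1,N_2}^2$ term.
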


For $l_{12}$ the equation is also linear and closed. Namely,
\begin{eqnarray*}
l_{12}(n+1) &=&l_{12}(n)+\left[ -\left( \alpha _{12}+\alpha _{21}\right)
l_{12}(n)+\left( v_{1}-v_{2}\right) \right] \gamma _{N_{1},N_{2}}+(\alpha
_{12}+\alpha _{21})\left( v_{2}-v_{1}\right) \gamma _{N_{1},N_{2}}^{2} \\
&=&l_{12}(n)\left[ 1-\gamma _{N_{1},N_{2}}\left( \alpha _{12}+\alpha
_{21}\right) \right] +\gamma _{N_{1},N_{2}}\left( v_{1}-v_{2}\right) \left[
1-\gamma _{N_{1},N_{2}}\left( \alpha _{12}+\alpha _{21}\right) \right] ,
\end{eqnarray*}
and thus we get
\begin{eqnarray}
l_{12}(n) &=&l_{12}(0)R^{n}+\frac{v_{1}-v_{2}}{\alpha _{12}+\alpha _{21}}
\left( 1-R^{n}\right) R,  \label{eq:l12} \\
&=&\frac{v_{1}-v_{2}}{\alpha _{12}+\alpha _{21}}R+\left( l_{12}(0)-\frac{
v_{1}-v_{2}}{\alpha _{12}+\alpha _{21}}R\right) R^{n}  \label{eq:C-CRn}
\end{eqnarray}
where $R=1-\gamma _{N_{1},N_{2}}\left( \alpha _{12}+\alpha _{21}\right) $.

The variables $S(n)=\alpha _{21}\mu _{1}(n)+\alpha _{12}\mu _{2}(n)$ also
satisfy the recurrent equations
\[
S(n+1)=S(n)+\gamma _{N_{1},N_{2}}\left[ \alpha _{21}v_{1}+\alpha _{12}v_{2}
\right]
\]
Thus
\begin{equation}
S(n)=S(0)+n\frac{\alpha _{21}v_{1}+\alpha _{12}v_{2}}{N_{1}\alpha
_{12}+N_{2}\alpha _{21}}=S(0)+\left( n\gamma _{N_{1},N_{2}}\right) \cdot
(\alpha _{21}v_{1}+\alpha _{12}v_{2})  \label{mean-S}
\end{equation}

From (\ref{eq:C-CRn}) and (\ref{mean-S}) the statement of Theorem~\ref{t:l-mu} follows.

\paragraph{Empirical variances}

We would like to get closed recurrent equations for $d_{i}(n)$.

\begin{lemma}
The following identity holds

\[
\mathsf{E}\left( D_{1}(n+1)\,\left\vert \,\left(
x_{i}^{(1)}(t),x_{j}^{(2)}(t),\,i=1,\ldots ,N_{1},\,j=1,\ldots ,N_{2}\right)
,\,t\leq \tau _{n}\right. \right) =\,
\]
\[
\,=D_{1}(n)+\alpha _{12}\gamma _{N_{1},N_{2}}\left[ \frac{N_{1}-1}{N_{1}}
\left( D_{2}(n)-D_{1}(n)+\left( X_{1}(n)-X_{2}(n)\right) ^{2}\right) -\frac{2
}{N_{1}}D_{1}(n)+\,\right.
\]
\[
\left. \,+2\,\frac{N_{1}-1}{N_{1}}\left( \gamma _{N_{1},N_{2}}\left(
v_{1}-v_{2}\right) \left( X_{1}(n)-X_{2}(n)\right) +\gamma
_{N_{1},N_{2}}^{2}\left( v_{1}-v_{2}\right) ^{2}\right) \right]
\]
and a similar identity for $\mathsf{E}(D_{2}(n+1)\,|\,\ldots )$ can be
obtained by a simple exchange of indices $1\leftrightarrow 2$.
\end{lemma}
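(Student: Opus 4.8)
The plan is to describe the one-step transition of the embedded chain explicitly and then average. Between the jump times $\tau_n$ and $\tau_{n+1}$ every particle first drifts by its velocity times the inter-jump time $\Delta:=\tau_{n+1}-\tau_n$, and then exactly one particle performs a jump at $\tau_{n+1}$. Since $D_1$ is invariant under a common translation of all type-1 coordinates, the pure drift of the type-1 cloud does not by itself change $D_1$; only the single jump does. Writing $\gamma:=\gamma_{N_1,N_2}$, I would first condition on which clock rings: by the competing-exponentials property $\Delta$ is exponential with mean $\gamma$ and is independent of the identity of the jumping particle (conditioning on the minimizer does not bias the minimum), which is a type-1 particle with probability $N_1\alpha_{12}\gamma$ and a type-2 particle with probability $N_2\alpha_{21}\gamma$. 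If a type-2 particle jumps the type-1 cloud only drifts, so $D_1(n+1)=D_1(n)$; hence the entire increment of $D_1$ comes from the type-1-jump event, carrying the prefactor $N_1\alpha_{12}\gamma$ that sits in front of the bracket in the statement.

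Next I would compute the exact change of $D_1$ caused by one type-1 jump. Let the jumping index be $p$ (uniform in $\{1,\dots,N_1\}$) and its target be $q$ (uniform in $\{1,\dots,N_2\}$). Set $a_k:=x_k^{(1)}(\tau_n)-X_1(n)$ and $b_k:=x_k^{(2)}(\tau_n)-X_2(n)$, so that $\sum_k a_k=0$, $\frac1{N_1}\sum_k a_k^2=D_1(n)$, and similarly for $b$. The jump moves particle $p$ from its drifted position to the drifted target, changing its coordinate by
\[
\delta_p=(b_q-a_p)-\bigl(X_1(n)-X_2(n)\bigr)-(v_1-v_2)\Delta .
\]
Relative to the drifted reference level $X_1(n)+v_1\Delta$, the new centered coordinates are $a_k-\delta_p/N_1$ for $k\ne p$ and $a_p+\tfrac{N_1-1}{N_1}\delta_p$ for $k=p$; squaring, summing, and using $\sum_k a_k=0$ yields the deterministic identity
\[
D_1(n+1)=D_1(n)+\frac{2\,a_p\,\delta_p}{N_1}+\frac{N_1-1}{N_1^2}\,\delta_p^{2},
\]
valid on the type-1-jump event with data $(p,q,\Delta)$.

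Finally I would take the conditional expectation over the mutually independent $(p,q,\Delta)$. Averaging over the uniform indices gives $\sE a_p=\sE b_q=0$, $\sE a_p^2=D_1(n)$, $\sE b_q^2=D_2(n)$, while the exponential inter-jump time contributes $\sE\Delta=\gamma$ and, crucially, $\sE\Delta^2=2\gamma^2$. All terms carrying a single factor $a_p$ or $b_q$ vanish by the zero-mean property and independence, so that
\[
\sE[a_p\delta_p]=-D_1(n),\qquad
\sE[\delta_p^2]=D_1(n)+D_2(n)+\bigl(X_1(n)-X_2(n)\bigr)^2+2\gamma(v_1-v_2)\bigl(X_1(n)-X_2(n)\bigr)+2\gamma^2(v_1-v_2)^2 .
\]
Inserting these into the identity, multiplying by $N_1\alpha_{12}\gamma$ (so the outer $N_1$ cancels), and collecting the coefficient of $D_1(n)$ — the contribution $2\sE[a_p\delta_p]=-2D_1(n)$ together with the $+\frac{N_1-1}{N_1}D_1(n)$ coming from $\sE[\delta_p^2]$ combine to $-\frac{N_1+1}{N_1}D_1(n)$, which the statement displays as $\frac{N_1-1}{N_1}(-D_1(n))-\frac{2}{N_1}D_1(n)$ — reproduces the claimed formula term by term; the identity for $D_2$ follows by the symmetric exchange $1\leftrightarrow2$.

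The step I expect to require the most care is the treatment of the random drift time $\Delta$: because $\Delta$ is exponential rather than deterministic, its second moment is $2\gamma^2$, not $\gamma^2$, and it is exactly this factor $2$ that produces the $2\gamma^2(v_1-v_2)^2$ term. One must therefore keep $\Delta$ inside $\delta_p$ until the expectation is taken, and invoke the independence of $\Delta$ from both the jump identity and the uniform choices $(p,q)$ to annihilate all the cross terms.
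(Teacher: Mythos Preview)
Your argument is correct and is exactly the ``straightforward calculation'' the paper invokes without giving details: decompose the step into the common drift plus the single jump, use the competing-exponentials fact that $\Delta$ is exponential of mean $\gamma_{N_1,N_2}$ independently of which clock rings, average over the uniform indices $(p,q)$, and collect terms. Your bookkeeping, including the observation that $\mathsf{E}\Delta^{2}=2\gamma_{N_1,N_2}^{2}$ is what produces the last term, is accurate.
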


Proof of this lemma is a straightforward calculation. Taking expectations in
the above formulae we see that in the equations for $d_{i}(n)$ the term $r(n) $ is also involved.

Consider the vector $w(n)=(d_{1}(n),d_{2}(n),r(n))^{T}$. We have
\begin{equation}
w(n+1)=Aw(n)+f(n)+g,  \label{eq:wAf}
\end{equation}
where $A$ is a ($3\times 3$)-matrix, not depending on $n$, $f(n)$ is a
bounded vector function of $n$, $g$ is a constant vector
\begin{equation}
A=E+\gamma _{N_{1},N_{2}}B,  \label{eq:AEdB}
\end{equation}
\[
B=B_{1}+B_{2}=\left(
\begin{array}{rcl}
-\alpha _{12} & \alpha _{12} & \alpha _{12} \\
\alpha _{21} & \,\,-\alpha _{21}\,\, & \alpha _{21} \\
0 & 0 & -2(\alpha _{12}+\alpha _{21})
\end{array}
\right) +\left(
\begin{array}{rcl}
-\alpha _{12}/N_{1} & -\alpha _{12}/N_{1} & -\alpha _{12}/N_{1} \\
-\alpha _{21}/N_{2} & \,\,-\alpha _{21}/N_{2}\,\, & -\alpha _{21}/N_{2} \\
{\displaystyle\frac{\alpha _{12}}{N_{1}}+\frac{\alpha _{21}}{N_{2}}} & \,\,\,
{\displaystyle\frac{\alpha _{12}}{N_{1}}+\frac{\alpha _{21}}{N_{2}}\,\,\,} &
{\displaystyle\frac{\alpha _{12}}{N_{1}}+\frac{\alpha _{21}}{N_{2}}}
\end{array}
\right) ,
\]

\begin{equation}
f(n)=2\,\gamma _{N_{1},N_{2}}\,(v_{1}-v_{2})\,l_{12}(n)\,\overrightarrow{q}
_{N_{1},N_{2}},\quad \overrightarrow{q}_{N_{1},N_{2}}=\left(
\begin{array}{c}
0 \\
0 \\
1
\end{array}
\right) +\gamma _{N_{1},N_{2}}\left(
\begin{array}{c}
\alpha _{12}-\alpha _{12}/N_{1} \\
\alpha _{21}-\alpha _{21}/N_{2} \\
-(\alpha _{12}+\alpha _{21})
\end{array}
\right) ,  \label{eq:fdl12}
\end{equation}
\[
g=\gamma _{N_{1},N_{2}}^{2}(v_{1}-v_{2})^{2}\left(
\begin{array}{c}
\alpha _{12}\gamma _{N_{1},N_{2}} \\
\alpha _{21}\gamma _{N_{1},N_{2}} \\
2
\end{array}
\right) .
\]

Note that $A,f(n),g$ depend on $N_{1},N_{2}$ and on other parameters of the
model. Note that for sufficiently large $N_{1},N_{2}$ (if $\alpha _{ij}$ are
fixed) all components of the vector $\overrightarrow{q}_{N_{1},N_{2}}$ are
positive. Note also that the matrix elements of $A$
are all positive for sufficiently large $N_{1},N_{2}$.
Thus Perron-Frobenius theory is applicable.

\paragraph{Spectral properties of the matrix $A$}

It is easy to check that the matrix $B_{1}$ has three distinct eigenvalues $\lambda _{1}=-(\alpha _{12}+\alpha _{21})$, $\lambda _{2}=0$, $\lambda
_{3}=-2(\alpha _{12}+\alpha _{21})$. We will study asymptotic behavior of the model for the case
when $N_{1}=c_{1}N$, $N_{2}=c_{2}N$ and $N\rightarrow \infty $. Denote $\Delta :=c_{1}\alpha _{12}+c_{2}\alpha _{21}$, that is $\gamma
_{N_{1},N_{2}}=\left( N\Delta \right) ^{-1}$.

For large $N_{i}$ the matrix $B$ is a small perturbation of the matrix $B_{1} $
\[
B=B_{1}+\frac{1}{N}B_{2,k}=B_{1}+\frac{1}{N}\left(
\begin{array}{rcl}
-\alpha _{12}/c_{1} & -\alpha _{12}/c_{1} & -\alpha _{12}/c_{1} \\
-\alpha _{21}/c_{2} & \,\,-\alpha _{21}/c_{2}\,\, & -\alpha _{21}/c_{2} \\
{\displaystyle\frac{\alpha _{12}}{c_{1}}+\frac{\alpha _{21}}{c_{2}}} & \,\,\,
{\displaystyle\frac{\alpha _{12}}{c_{1}}+\frac{\alpha _{21}}{c_{2}}\,\,\,} &
{\displaystyle\frac{\alpha _{12}}{c_{1}}+\frac{\alpha _{21}}{c_{2}}}
\end{array}
\right) .
\]
We will use perturbation theory to get eigenvalues of $B$. For $\lambda
_{1}(N)$ and $\lambda _{3}(N)$ it is sufficient to write
\begin{equation}
\lambda _{1}(N) =-(\alpha _{12}+\alpha _{21})+\underline{O}\left( {N}^{-1}\right) ,
\label{eq:lam1}
\qquad
\lambda _{3}(N) =-2(\alpha _{12}+\alpha _{21})+\underline{O}\left( {N}^{-1}\right) ,
\label{eq:lam3}
\end{equation}
however for $\lambda _{2}(N)$ we will use the result from \cite{Kato}, that
\begin{equation}
\lambda _{2}(N)=\frac{1}{N}\left( \psi ^{\prime }B_{2,k}\phi \right) +
\underline{O}\left( \frac{1}{N^{2}}\right) ,  \label{eq:lam2}
\end{equation}
where the column vector $\phi $ is the right eigenvector of $B_{1}$ with
eigenvalue $0$, the row vector $\psi ^{\prime }$ is the left eigenvector of $B_{1}$ with eigenvalue $0$, and moreover $\psi ^{\prime }\phi =1$. One can
take
\[
\phi =\left(
\begin{array}{c}
1 \\
1 \\
0
\end{array}
\right) ,\qquad \psi ^{\prime }=\left( 1+\frac{\alpha _{21}}{\alpha _{12}},1+
\frac{\alpha _{12}}{\alpha _{21}},1\right) /Z,\qquad Z=\left( \alpha
_{12}+\alpha _{21}\right) \left( \frac{1}{\alpha _{12}}+\frac{1}{\alpha _{21}
}\right) .
\]

Substituting these values to (\ref{eq:lam2}), we get
\begin{equation}
\lambda _{2}(N)=-\frac{\varkappa _{2}}{N}+\underline{O}\left( \frac{1}{N^{2}}
\right) ,  \label{eq:lam2yavn}
\end{equation}
where
\begin{equation}
\varkappa _{2}=2Z^{-1}\left( \frac{\alpha _{21}}{c_{1}}+\frac{\alpha _{12}}{
c_{2}}\right)  \label{eq:kap2}
\end{equation}

Denote $\sigma _{1}(N)$, $\sigma _{1}(N)$, $\sigma _{1}(N)$ the eigenvalues
of the matrix~$A$. From (\ref{eq:AEdB}) and (\ref{eq:lam1})--(\ref{eq:lam2})
we have the following assertion.

\begin{lemma}
\label{l-schA}

The eigenvalues of $A$ are
\begin{eqnarray}
\sigma _{1}(N) &=&1-\frac{b_{1}}{N}+\underline{O}\left( \frac{1}{N^{2}}
\right),
\qquad\qquad
\sigma _{3}(N) \;=\;1-\frac{b_{3}}{N}+\underline{O}\left( \frac{1}{N^{2}}
\right) ,\qquad\qquad{\null}
\nonumber \\
\sigma _{2}(N) &=&1-\frac{b_{2}}{N^{2}}+\underline{O}\left( \frac{1}{N^{3}}
\right)  \label{eq:sch2}
\end{eqnarray}
for some positive constants $b_{1},b_{2},b_{3}$.
\end{lemma}

We will need also the eigenvectors of $A$, which we denote
correspondingly by $e_{1}^{(N)},e_{2}^{(N)},e_{3}^{(N)}$. It is clear that
they are also the eigenvectors of the matrix $B_{1}+\frac{1}{N}B_{2,k}$.
Using the
perturbation theory~\cite{Kato} we conclude that
$e_{1}^{(N)},e_{2}^{(N)},e_{3}^{(N)}$ are small perturbations of
the eigenvectors $e_{1},e_{2},e_{3}$ of the matrix~$B_{1}$.
Thus, calculating $e_{1},e_{2},e_{3}$,  we get
\[
e_{1}^{(N)}=\left(
\begin{array}{c}
-\alpha _{12} \\
\alpha _{21} \\
0
\end{array}
\right) +\underline{O}\left( \frac{1}{N}\right) ,\quad e_{2}^{(N)}=\left(
\begin{array}{c}
1 \\
1 \\
0
\end{array}
\right) +\underline{O}\left( \frac{1}{N}\right) ,\quad e_{3}^{(N)}=\left(
\begin{array}{c}
-\alpha _{12}^{2} \\
-\alpha _{21}^{2} \\
(\alpha _{12}+\alpha _{21})^{2}
\end{array}
\right) +\underline{O}\left( \frac{1}{N}\right) .
\]

It is clear that with~(\ref{eq:lam1}) and~(\ref{eq:lam2yavn})
it is not difficult to find explicitly the constants~$b_{i}$. We will need
only $b_{2}$:
\begin{equation}
b_{2}\,=\,\frac{\varkappa _{2}}{\Delta }\,=\,\frac{2}{Z}\cdot \frac{{
\displaystyle\frac{\alpha _{12}}{c_{1}}+\frac{\alpha _{21}}{c_{2}}}}{
c_{1}\alpha _{12}+c_{2}\alpha _{21}}\,.  \label{eq:b2-kap}
\end{equation}

\paragraph{Some lemmas on the asymptotic behaviour}

The solution of the equation (\ref{eq:wAf}) can be uniquely written as
\begin{equation}
w(n)=A^{n}w(0)+\sum_{j=1}^{n}A^{j-1}f(n-j)+(1-A)^{-1}(1-A^{n})g.
\label{eq:w3slag}
\end{equation}
The following result shows that the first and last tems in~(\ref{eq:w3slag})
do not influence the asymptotics of $w(n)$.

\begin{lemma}
The following estimates hold uniformy in $n$ and $N$
\[
\left\Vert A^{n}w(0)\right\Vert \,\leq \,Const,
\qquad
\left\Vert (1-A)^{-1}(1-A^{n})g\right\Vert \,\leq \,Const.
\]
\end{lemma}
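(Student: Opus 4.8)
The plan is to diagonalize $A$ and reduce both estimates to elementary bounds on powers of its eigenvalues. Since $A=E+\gamma_{N_1,N_2}B$ shares its eigenvectors with $B=B_1+\frac1N B_{2,k}$, and $B_1$ has three \emph{distinct} eigenvalues, the eigenvectors $e_1^{(N)},e_2^{(N)},e_3^{(N)}$ are, as established above, small perturbations of the fixed eigenvectors $e_1,e_2,e_3$ of $B_1$, which are linearly independent. Hence the matrix $S_N=\bigl(e_1^{(N)},e_2^{(N)},e_3^{(N)}\bigr)$ converges to an invertible matrix, so $\|S_N\|$ and $\|S_N^{-1}\|$ are bounded uniformly in $N$, and $A=S_N\,\Lambda_N\,S_N^{-1}$ with $\Lambda_N=\mathrm{diag}(\sigma_1(N),\sigma_2(N),\sigma_3(N))$.

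First I would treat $\|A^n w(0)\|$. By Lemma~\ref{l-schA} each $\sigma_i(N)=1-\underline{O}(1/N)$ is real, positive and strictly less than $1$ for all large $N$: reality and simplicity are inherited from the perturbation of $B_1$, and Perron--Frobenius guarantees positivity of the dominant eigenvalue. Consequently $0<\sigma_i(N)^n\le1$ for every $n\ge0$, so $\|\Lambda_N^n\|\le1$ and
\[
\|A^n w(0)\|=\|S_N\Lambda_N^n S_N^{-1}w(0)\|\le\|S_N\|\,\|S_N^{-1}\|\,\|w(0)\|\le \mathrm{Const},
\]
the bound being uniform in $n$ and $N$ because the initial empirical variances assembled in $w(0)$ are bounded in $N$.

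For the second estimate I would write, in the eigenbasis,
\[
(1-A)^{-1}(1-A^n)g=S_N\,\mathrm{diag}\!\left(\frac{1-\sigma_i(N)^n}{1-\sigma_i(N)}\right)S_N^{-1}g .
\]
Since $0<\sigma_i(N)<1$, the geometric sum obeys $\tfrac{1-\sigma_i^n}{1-\sigma_i}=\sum_{m=0}^{n-1}\sigma_i^m\le\tfrac{1}{1-\sigma_i}$, a bound uniform in $n$. It therefore suffices to check that $\tfrac{|\hat g_i|}{1-\sigma_i(N)}$ stays bounded, where $\hat g=S_N^{-1}g$ satisfies $\|\hat g\|\le\|S_N^{-1}\|\,\|g\|=\underline{O}(N^{-2})$, because $g=\gamma_{N_1,N_2}^2(v_1-v_2)^2(\alpha_{12}\gamma_{N_1,N_2},\alpha_{21}\gamma_{N_1,N_2},2)^T$ with $\gamma_{N_1,N_2}=(N\Delta)^{-1}=\underline{O}(N^{-1})$. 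For $i=1,3$ one has $1-\sigma_i(N)\sim b_i/N$, giving $\tfrac{|\hat g_i|}{1-\sigma_i}=\underline{O}(N^{-2})/\underline{O}(N^{-1})=\underline{O}(N^{-1})$; for the critical mode $i=2$ one has $1-\sigma_2(N)\sim b_2/N^2$, giving $\tfrac{|\hat g_2|}{1-\sigma_2}=\underline{O}(N^{-2})/\underline{O}(N^{-2})=\underline{O}(1)$. Summing the three bounded contributions and multiplying by $\|S_N\|$ yields the claimed uniform bound.

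The main obstacle is precisely this last balance in the slow direction $e_2^{(N)}$: the resolvent $(1-A)^{-1}$ blows up like $N^2$ there, and only the $\underline{O}(N^{-2})$ smallness of the forcing $g$ (inherited from the factor $\gamma_{N_1,N_2}^2$) compensates it. Making this rigorous relies on the sharp asymptotics $\sigma_2(N)=1-b_2/N^2+\underline{O}(N^{-3})$ of Lemma~\ref{l-schA} together with the explicit order of $g$; had the gap been only $\underline{O}(N^{-1})$, or $g$ only $\underline{O}(N^{-1})$, the estimate would fail. The remaining ingredients---uniform boundedness of $S_N,S_N^{-1}$ and of $w(0)$---are routine once the convergence $e_i^{(N)}\to e_i$ and the distinctness of the limiting eigenvalues of $B_1$ are invoked.
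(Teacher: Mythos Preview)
Your proof is correct and follows the same route as the paper: diagonalize $A$ in the eigenbasis $e_i^{(N)}$, use $|\sigma_i(N)|<1$ for the first bound, and for the second bound balance the $\gamma_{N_1,N_2}^2=\underline{O}(N^{-2})$ size of $g$ against the denominators $1-\sigma_i(N)$, the only delicate case being $i=2$ where $1-\sigma_2(N)\sim b_2/N^2$. The paper does this by factoring $g=\gamma_{N_1,N_2}^2\sum_i k_{g,i}^\circ(N)e_i^{(N)}$ with bounded coefficients, while you package the same computation via $S_N$ and $S_N^{-1}$; the substance is identical.
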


\begin{proof}
Using the basis of eigenvectors of $A$ we can write
\(
w(0)={\displaystyle\sum_{i=1}^{3}k_{w,i}e_{i}^{(N)}}.
\)
Then
\(
A^{n}w(0)=\displaystyle\sum_{i=1}^{3}k_{w,i}\left( \sigma _{i}(N)\right) ^{n}e_{i}^{(N)}.
\)
Note that ${\displaystyle\sup_{N}\left\Vert e_{i}^{(N)}\right\Vert <\infty }$.
 Moreover, $\left\vert \sigma _{i}(N)\right\vert <1$, starting from some~$N$.
 Then the first estimate follows. To get the second estimate we write
\[
g=\gamma _{N_{1},N_{2}}^{2}\sum_{i=1}^{3}k_{g,i}^{\circ }(N)e_{i}^{(N)}
\]
with the coefficients $k_{g,i}^{\circ }(N)$, bounded in $N$. Apply the
operator $(1-A)^{-1}(1-A^{n})$ to the latter expansion and note that by
Lemma~\ref{l-schA}
\[
\gamma _{N_{1},N_{2}}^{2}\frac{1-\left( \sigma _{i}(N)\right) ^{n}}{1-\sigma
_{i}(N)}\,\leq \,\frac{Const}{N}\qquad i=1,3,
\]
and
\[
\gamma _{N_{1},N_{2}}^{2}\frac{1-\left( \sigma _{2}(N)\right) ^{n}}{1-\sigma
_{2}(N)}\,\leq \,Const\,
\]
Then we get the estimate.
\end{proof}

Now we will analyze the second term in~(\ref{eq:w3slag})
\[
V_{N}(n):={\displaystyle\sum_{j=1}^{n}A^{j-1}f(n-j),}
\]
Note that the vector function $f(n)$, defined by the formula~(\ref{eq:fdl12}
), is known explicitely with the formula~(\ref{eq:l12}).

Let $\xi _{1},\xi _{2},\xi _{3}\in \mathbb{R}$ are such that
\(
\left(
0,0,1
\right)^T =\displaystyle\sum_{i=1}^{3}\xi _{i}e_{i}.
\)
Then we have immediately that
\[
\xi _{1}=\frac{\alpha _{21}-\alpha _{12}}{(\alpha _{12}+\alpha _{21})^{2}}
\,,\quad \xi _{2}=\frac{\alpha _{12}\alpha _{21}}{(\alpha _{12}+\alpha
_{21})^{2}}\,,\quad \xi _{3}=\frac{1}{(\alpha _{12}+\alpha _{21})^{2}}\,.
\]
If $\xi _{i}(N)\in \mathbb{R}$ is the coefficient in the expansion $\overrightarrow{q}_{N_{1},N_{2}}={\displaystyle\sum_{i=1}^{3}\xi
_{i}(N)e_{i}^{(N)}}$, then obviously
\begin{equation}
\xi _{i}(N)=\xi _{i}+\underline{O}\left( {N}^{-1}\right) ,\quad i=1,2,3\,.
\label{eq:xin-xi}
\end{equation}
We have then
\(
V_{N}(n)=\displaystyle\sum_{i=1}^{3}\xi _{i}(N)\left( 2\,\gamma
_{N_{1},N_{2}}\,(v_{1}-v_{2})\,{\textstyle\sum\limits_{j=1}^{n}l_{12}(n-j)}\left( \sigma
_{i}(N)\right) ^{j-1}\right) e_{i}^{(N)}\,.
\)
By~(\ref{eq:xin-xi}), and neglecting $\underline{O}\left( {N}^{-1}\right)$,
we have that for $N\rightarrow \infty $ the asymptotics of $V_{N}(n)$
coincides with the asymptotics of the sum
\begin{equation}
V_{N}^{1}(n):=\sum_{i=1}^{3}\xi _{i}\left( 2\,\gamma
_{N_{1},N_{2}}\,(v_{1}-v_{2})\,\sum_{j=1}^{n}l_{12}(n-j)\left( \sigma
_{i}(N)\right) ^{j-1}\right) e_{i}^{(N)}\,.  \label{eq:v1N}
\end{equation}

\begin{lemma}
\label{l-po-PerrFrob}

For $n=N\theta (N)$, where $\theta (N)\rightarrow +\infty $, the asymptotics
of $V_{N}^{1}(n)$ is defined by the second term, that is the first and third
are small with respect to the second.
\end{lemma}

Remind that $\sigma _{2}(N)$ is the maximal eigenvalue of the positive
matrix~$A$. Then this corresponds to Perron-Frobenius theory.

\begin{proof}
Consider the formula~(\ref{eq:l12}). If we assume, that $l_{12}(0)<0$, then
from $v_{1}<v_{2}$ it follows that each coordinate of $f(n)$ is positive for all $n$
(we use it below).
Moreover, from~(\ref{eq:l12}) one can get that there exist constants $C_{1}>C_{2}>0$,
which do not depend on $N$ and $n$, such that
\begin{equation}
0<C_{2}<(v_{1}-v_{2})l_{12}(n)<C_{1}\quad \forall n,N  \label{eq:c1c2vl}
\end{equation}
Thus, the coefficient of~$e_{2}^{(N)}$ in the sum~(\ref{eq:v1N}) is positive
and can be {\it estimated from below\/} as
\[
2\xi _{2}C_{2}\,\gamma _{N_{1},N_{2}}\,\sum_{j=1}^{n}\left( \sigma
_{2}(N)\right) ^{j-1}=2\xi _{2}C_{2}\,\gamma _{N_{1},N_{2}}\,\frac{1-\left(
\sigma _{2}(N)\right) ^{n}}{1-\sigma _{2}(N)}\,
\]

Similarly, for $i=1,3$ the absolute values of the coefficients of~$e_{i}^{(N)}$
in the sum~(\ref{eq:v1N}) can be {\it estimated from above\/} as
\[
2\xi _{i}C_{1}\,\gamma _{N_{1},N_{2}}\,\sum_{j=1}^{n}\left( \sigma
_{i}(N)\right) ^{j-1}=2\xi _{i}C_{1}\,\gamma _{N_{1},N_{2}}\,\frac{1-\left(
\sigma _{i}(N)\right) ^{n}}{1-\sigma _{i}(N)}\,
\]

Thus, to end the proof of the lemma it is sufficient to compare the
asymptotics of the following three functions
\[
\frac{1-\left( \sigma _{1}(N)\right) ^{N\theta (N)}}{1-\sigma _{1}(N)},\quad
\frac{1-\left( \sigma _{2}(N)\right) ^{N\theta (N)}}{1-\sigma _{2}(N)},\quad
\frac{1-\left( \sigma _{3}(N)\right) ^{N\theta (N)}}{1-\sigma _{3}(N)}
\]
and to show that the first and the third are small with respect to the
second. It is convenient to consider separately two cases: ~\textbf{a)} $\theta (N)\rightarrow \infty ,\,\theta (N)/N\rightarrow 0$, \textbf{b)} $\theta (N)\geq cN$, and use Lemma~\ref{l-schA}. We omit these details.
\end{proof}

\paragraph{Asymptotic behavior of expectations of empirical variances}

Now we are ready to study asymptotics of the functions $R_{1}$ and $R_{2}$
and to prove  Theorem~\ref{t-RR}. Remind that
the intervals between jumps of the embedded chain have exponential
distribution with the mean $\gamma _{N_{1},N_{2}}=(N\Delta )^{-1}$, then for
large $N$ the connection between discrete time~$n$ of the embedded chain and
absolute time~$t$ is
\(
n\,\sim \,t/\gamma _{N_{1},N_{2}}=(N\Delta )t .
\)
Thus we can take $d_{i}(\,(N\Delta )t\,)$ instead of $R_{i}(t)$. Remind also
that $d_{1}$ and $d_{2}$ are the first and second components of the vector~$w $ correspondingly.

Now we can use the lemma~\ref{l-po-PerrFrob}, which shows that, as $t(N)\rightarrow \infty $, the asymptotics of $w(\,(N\Delta )t(N)\,)$
coincides with the asymptotics of the vector
\[
\xi _{2}\left( 2\,\gamma _{N_{1},N_{2}}\,(v_{1}-v_{2})\,\sum_{j=1}^{(N\Delta
)t(N)}l_{12}(\,(N\Delta )t(N)-j\,)\left( \sigma _{2}(N)\right) ^{j-1}\right)
e_{2}^{(N)}.
\]
As $e_{2}^{(N)}=(1,1,0)^{T}+\underline{O}\left( N^{-1}\right) $, then
\[
d_{i}(\,(N\Delta )t\,)\sim 2\,\xi _{2}\,\gamma
_{N_{1},N_{2}}\,(v_{1}-v_{2})\sum_{j=1}^{(N\Delta )t(N)}l_{12}(\,(N\Delta
)t(N)-j\,)\left( \sigma _{2}(N)\right) ^{j-1}.
\]
To find the asymptotics of this expression, we use the representation~(\ref{eq:C-CRn}), which gives
\[
l_{12}(n)=C_{1,N}^{\prime }+C_{2,N}^{\prime }R^{n},\quad \quad
C_{1,N}^{\prime }\rightarrow \frac{v_{1}-v_{2}}{\alpha _{12}+\alpha _{21}}
,\quad C_{2,N}^{\prime }\rightarrow C_{2,\infty }^{\prime }\quad
(N\rightarrow \infty ).
\]
Note that the following estimate holds uniformly in $N$ and $n$
\begin{eqnarray*}
\left\vert \gamma _{N_{1},N_{2}}\,\sum_{j=1}^{n}C_{2,N}^{\prime
}R^{n-j}\left( \sigma _{2}(N)\right) ^{j-1}\right\vert &\leq &(N\Delta
)^{-1}\left\vert C_{2,N}^{\prime }\right\vert \sum_{k=1}^{\infty
}R^{k}\,=\,(N\Delta )^{-1}\left\vert C_{2,N}^{\prime }\right\vert \cdot
\frac{1}{1-R} \\
&\leq &{Const},
\end{eqnarray*}
since $1-R=\gamma_{\nond}(\alpha _{12}+\alpha _{21})$.
Consider now the asymptotics of the following expression
\[
2\,\xi _{2}\,\gamma _{N_{1},N_{2}}\,(v_{1}-v_{2})\sum_{j=1}^{(N\Delta )t(N)}
\frac{v_{1}-v_{2}}{\alpha _{12}+\alpha _{21}}\left( \sigma _{2}(N)\right)
^{j-1}=\,2\,\xi _{2}\,\frac{1}{N\Delta }\cdot \frac{\left(
v_{1}-v_{2}\right) ^{2}}{\alpha _{12}+\alpha _{21}}\cdot \frac{1-\left(
\sigma _{2}(N)\right) ^{(N\Delta )t(N)}}{1-\sigma _{2}(N)}\,.
\]
By~(\ref{eq:sch2}) and~(\ref{eq:b2-kap})
\(
\sigma _{2}(N)=1-{\displaystyle\frac{\left( \varkappa _{2}/\Delta \right) }{
N^{2}}+\underline{O}\left( {N^{-3}}\right) }
\)
and thus the problem is reduced to the study of asymptotics of
\[
\frac{2\,\xi _{2}\left( v_{1}-v_{2}\right) ^{2}}{\varkappa _{2}\left( \alpha
_{12}+\alpha _{21}\right) }\,N\,\left( 1-\left( 1-{\displaystyle\frac{\left(
\varkappa _{2}/\Delta \right) }{N^{2}}}\right) ^{(N\Delta )t(N)}\right) .
\]
Now the theorem easily follows.

\end{document}